
%



%

\documentclass[11pt,letterpaper]{amsart}


%
\usepackage{amsmath,amsfonts,amsthm,amssymb,stmaryrd,bm,cite,enumerate}
\topmargin -3pc 


\theoremstyle{plain}

\numberwithin{equation}{section}

\newtheorem{thm}{Theorem}[section]
\newtheorem{cor}[thm]{Corollary}

\theoremstyle{definition}
\newtheorem{example}{Example}

\allowdisplaybreaks  

\newcommand{\complex}{{\mathbb C}}
\newcommand{\real}{{\mathbb R}}

\newcommand{\tbullet}{\mathrel{\raise .2ex\hbox{\tiny$\bullet$}}} 

\newcommand{\trace}{\mathrm{tr\,}}
\newcommand{\rmob}{\mathrm{Ob\,}}
\newcommand{\ityes}{\textit{yes}}      
\newcommand{\itno}{\textit{no}}
 \newcommand{\Var}{\operatorname{Var}}
 \newcommand{\rmre}{\mathrm{Re}}

\newcommand{\escript}{\mathcal{E}}
\newcommand{\lscript}{\mathcal{L}}
\newcommand{\pscript}{\mathcal{P}}
\newcommand{\sscript}{\mathcal{S}}
\newcommand{\hhat}{\widehat{h}}

\newcommand{\ab}[1]{\left|#1\right|}
\newcommand{\brac}[1]{\left\{#1\right\}}
\newcommand{\paren}[1]{\left(#1\right)}
\newcommand{\sqbrac}[1]{\left[#1\right]}

\newcommand{\doubleab}[1]{\left|\left|#1\right|\right|}
\newcommand{\elbows}[1]{{\left\langle#1\right\rangle}}
\newcommand{\ket}[1]{{\left|#1\right>}}
\newcommand{\bra}[1]{{\left<#1\right|}}

\errorcontextlines=0

\begin{document}

\title{UNCERTAINTY OF QUANTUM STATES}
\author{Stan Gudder}
\address{Department of Mathematics, 
University of Denver, Denver, Colorado 80208}
\email{sgudder@du.edu}
\date{}
\maketitle

\begin{abstract}
The uncertainty of a quantum state is given by the composition of two components. The first is called the quantum component and is given by the probability distribution of an observable relative to the state. The second is the classical component which is an uncertainty function that is applied to the first component. We characterize uncertainty functions in terms of four axioms. We then study four examples called variance, entropy, geometric and sine uncertainty functions. The final section presents the general theory of state uncertainty.
\end{abstract}

\section{Introduction}  
Uncertainty in quantum mechanics was first introduced by Heisenberg to describe restrictions on the accuracy of measurement results for two or more noncommuting observables \cite{hei72}. These uncertainty relations were later systemized by various authors \cite{cvtw17,deu83,fgg13,mu88,rob29}. They have become the central ingredient in the security analysis of many cryptographic protocols such as quantum key distribution \cite{tr11}. They have also been used as a powerful tool for entanglement witness, EPR steering \cite{sbwch13} and quantum metrology \cite{glm11}. These studies have been useful for describing the uncertainty of quantum measurements. But at a more fundamental level, what is the uncertainty of a quantum state and what does this actually mean? One can safely say that the identification of the state of a quantum system is one of the most important problems in quantum mechanics. It is certainly of great importance in quantum computation and information theory \cite{hz12,nc00}. If we have a numerical measure of the uncertainty or equivalently the certainty of a quantum state then we can choose the most likely from a finite set of possible choices. In this article, we point out that the answer to the above question is the reverse of the measurement uncertainty procedure. If $\rho$ is an unknown quantum state, the uncertainty of $\rho$ depends on the context (observable) under which the system is being measured. We shall follow the work in
\cite{zt24}. However, in this previous article, only atomic projection-valued measures are considered and we shall employ the more general positive operator-valued measures (POVM). Moreover, our approach is more complete.

In this article, we shall only consider quantum systems described by a finite dimensional Hilbert space $H$. In this case, an \textit{observable} is a finite set $A=\brac{A_0,A_1,\ldots ,A_{d-1}}$ where $A_j\ne 0$ is a positive operator, $i=0,1,\ldots ,d-1$ with
$\sum\limits _{i=0}^{d-1}A_i=I$ and $I$ is the identity operator. We call $i=0,1,\ldots ,d-1$ the \textit{outcomes} of $A$ and according to Born's rule, the probability that outcome $i$ is observed when $A$ is measured and the system is in the state $\rho$ is the trace $\trace (\rho A_i)$. In this introduction, we shall only present rough ideas and a more detailed study will be given in Section~2. In the context of the observable $A$, let $U_A(\rho )$ be the uncertainty of the state $\rho$ where we assume that $0\le U_A(\rho )\le 1$. The most important requirement of $U_A$ is \textit{concavity} which postulates that $U_A$ cannot decrease under mixtures. Mathematically this means that if
$\lambda\rho _1+(1-\lambda )\rho _2$ where $0\le\lambda\le 1$, is a mixture of the states $\rho _1,\rho _2$ then
\begin{equation*}
U_A\sqbrac{\lambda\rho _1+(1-\lambda )\rho _2}\ge\lambda U_A(\rho _1)+(1-\lambda )U_A(\rho _2)
\end{equation*}
Now there are two special types of states. One is an $A$-\textit{maximal uncertain} state $\rho$ that satisfies $\trace (\rho A_i)=\tfrac{1}{d}$,
$i=0,1,\ldots ,d-1$. For example, if $A$ is the projection-valued observable $A=\brac{\ket{\phi _i}\bra{\phi _i}\colon i=0,1,\ldots ,d-1}$ where
$\brac{\phi _i}$ is an orthonormal basis for $H$, then the completely random state $\tfrac{1}{d}\,I$ is $A$-maximal uncertain. We require that if $\rho$ is $A$-maximal uncertain, then $U_A(\rho )$ has its largest value, namely $U_A(\rho )=1$. The second special type of state is an $A$-\textit{maximal certain} state $\rho$ that satisfies $\trace (\rho A_i)=1$ for some $i\in\brac{0,1,\ldots ,d-1}$. For example, if again
$A=\brac{\ket{\phi _i}\bra{\phi _i}\colon i=0,1,\ldots ,d-1}$, then $\ket{\phi _i}\bra{\phi _i}$ is an $A$-maximal certain state for all $j\in\brac{0,1,\ldots ,d-1}$. We require that if $\rho$ is $A$-maximal certain, then $U_A(\rho )$ has its smallest value, namely $U_A(\rho )=0$. We will also have a fourth symmetric requirement which is not very important.

We shall see in Sections~2 and 4 that this is not the total picture. We have only discussed the quantum component of state uncertainty. There is also a classical component given by an uncertainty function on the set of probability distributions and is discussed in Section~2
\cite{luo05,ls17}. In Section~3, we study four examples of uncertainty functions. We call these variance, entropy, geometric and sine uncertainty functions. Section~4 presents the general theory of state uncertainty.

\section{Uncertainty Functions}  
In this article, $H$ will denote a complex, finite-dimensional Hilbert space. The set of self-adjoint operators on $H$ is denoted by
$\lscript _S(H)$ and a positive operator $\rho\in\lscript _S(H)$ with trace $\trace (\rho )=1$ is called a \textit{density operator} or \textit{state}. We denote the set of states on $H$ by $\sscript (H)$. A state corresponds to an initial condition for a quantum system. An operator $a\in\lscript _S(H)$ satisfying $0\le a\le I$ is called an \textit{effect} and the set of effects is denoted by $\escript (H)$. An effect $a$ corresponds to a measurement that has two outcomes \ityes --\itno\ (true-false). If a measurement of $a$ results in outcome \ityes\ we say that $a$
\textit{occurs} and if it results in outcome \itno\ we say that $a$ \textit{does not occur}. The effect $I$ always occurs and the effect $0$ never occurs. If $a\in\escript (H)$, then its \textit{complement} is $a'=I-a\in\escript (H)$ and $a$ occurs under a specific measurement if and only if $a'$ does not occur under that measurement. If $\rho\in\sscript (H)$ and $a\in\escript (H)$, then the probability that $a$ occurs when the system is in state $\rho$ is given by $P_\rho (a)=\trace (\rho a)$. Notice that 
\begin{equation*}
P_\rho (a)+P_\rho (a')=\trace (\rho a)+\trace (\rho a')=1
\end{equation*}
More generally, if $a,b\in\escript (H)$ with $a+b\in\escript (H)$ we write $a\perp b$ and in this case
\begin{equation*}
P_\rho (a)+P_\rho (b)=\trace (\rho a)+\trace (\rho b)=\trace\sqbrac{\rho (a+b)}\le 1
\end{equation*}

An \textit{observable} $A$ for a quantum system is given by a finite collection of effects $A=\brac{A_i\colon i=0,1,\ldots ,d-1}$ where the $A_i\in\escript (H)$, $A_i\ne 0$, need not be distinct and $\sum\limits _{i=0}^{d-1}A_i=I$. We call $\brac{0,1,\ldots ,d-1}$ the
\textit{outcome space} for $A$ and if a measurement of $A$ results in outcome $i$, we say that $A_i$ \textit{occurs}. We denote the set of observables on $H$ by $\rmob (H)$. If $\rho\in\sscript (H)$ and $A\in\rmob (H)$, then the probability that outcome $i$ occurs where $A$ is measured and the system is in state $\rho$ is given by $P_\rho (A_i)$. Since 
\begin{equation*}
\sum _iP_\rho (A_i)=\sum _i\trace (\rho A_i)=\trace\paren{\rho\sum _iA_i}=\trace (\rho I)=\trace (\rho )=1
\end{equation*}
we see that $P_\rho ^A(i)=P_\rho (A_i)$ is a probability disribution.

If $d>1$ is an integer, a \textit{probability simplex} is a set of probability distributions
\begin{equation*}
\pscript _d=\brac{(x_0,x_1,\ldots ,x_{d-1})\colon 0\le x_i\le 1,\sum _{i=0}^{d-1}x_i=1}
\end{equation*}
We treat the elements of $\pscript _d$ as real vectors and call $x\in\pscript _d$ a \textit{probability vector}. If $\lambda\in\sqbrac{0,1}$,
$x,y\in\pscript _d$, we see that 
\begin{align*}
\lambda x+(1-\lambda )y&=\lambda (x_0,x_1,\ldots ,x_{d-1})+(1-\lambda )(y_0,y_1,\ldots ,y_{d-1})\\
   &=\sqbrac{\lambda x_0+(1-\lambda )y_0,\lambda x_1+(1-\lambda )y_1,\ldots ,\lambda x_{d-1}+(1-\lambda )y_{d-1}}
\end{align*}
Since $0\le\lambda x_i+(1-\lambda )y_i\le 1$, $i=0,1,\ldots ,d-1$ and
\begin{equation*}
\sum _{i=0}^{d-1}\sqbrac{\lambda x_i+(1-\lambda )y_i}
   =\lambda\sum _{x=0}^{d-1}x_i+(1-\lambda )\sum _{i=0}^{d-1}y_i=\lambda +(1-\lambda )=1
\end{equation*}
we conclude that $\lambda x+(1-\lambda )y\in\pscript _d$ so $\pscript _d$ is a convex set. A function $f\colon\pscript _d\to\sqbrac{0,1}$ is
\textit{concave} if for all $x,y\in\pscript _d$ and $\lambda\in\sqbrac{0,1}$ we have
\begin{equation}                
\label{eq21}
f\sqbrac{\lambda x+(1-\lambda )y}\ge\lambda f(x)+(1-\lambda )f(y)
\end{equation}
If $f$ is concave, the stronger inequality 
\begin{equation}                
\label{eq22}
f\paren{\sum _i\lambda _iz_i}\ge\sum _i\lambda _if(z_i)
\end{equation}
holds for all $z_i\in\pscript _d$, $\lambda _i\in\sqbrac{0,1}$, $i=1,2,\ldots ,n$, with $\sum\limits _{i=1}^n\lambda _i=1$. Indeed, if
$\lambda _1,\lambda _2,\lambda _3\in\sqbrac{0,1}$ with $\lambda _1+\lambda _2+\lambda _3=1$, $\lambda _1\ne 1$, we have
\begin{align*}
f(\lambda _1z_1+\lambda _2z_2+\lambda _3z_3)
   &=f\sqbrac{\lambda _1z_1+(1-\lambda _1)\paren{\tfrac{1}{1-\lambda _1}\lambda _2z_2+\tfrac{1}{1-\lambda _1}\lambda _3z_3}}\\
   &\ge\lambda _1f(z_1)+(1-\lambda _1)f\paren{\tfrac{1}{1-\lambda _1}\lambda _2z_2+\tfrac{1}{1-\lambda _1}\lambda _3z_3}\\
   &\ge\lambda _1f(x_1)+(1-\lambda _1)\sqbrac{\tfrac{1}{1-\lambda _1}\lambda _2f(z_2)+\tfrac{1}{1-\lambda _1}f(z_3)}\\
   &=\lambda _1f(z_1)+\lambda _2f(z_2)+\lambda _3f(z_3)
\end{align*}
If $\lambda _1=1$, the inequality holds trivially. Continuing by induction, we conclude that \eqref{eq22} holds. If
$h\colon\brac{0,1,\ldots ,d-1}\to\brac{0,1,\ldots ,d-1}$ is a bijection, we define $\hhat\colon\pscript _d\to\pscript _d$ by
\begin{equation*}
\hhat\sqbrac{(x_0,x_1,\ldots ,x_{d-1})}=(x_{h(0)},x_{h(1)},\ldots ,x_{h(d-1)})
\end{equation*}
and call $\hhat$ a \textit{permutation} on $\pscript _d$. A function $f\colon\pscript _d\to\sqbrac{0,1}$ is \textit{symmetric} if
$f\sqbrac{\hhat (x)}=f(x)$ for any permutation $\hhat$ and $x\in\pscript _d$.

We call an $x\in\pscript _d$ satisfying $x_i=1$ for some $i\in\brac{0,1,\ldots ,d-1}$ a \textit{maximal certainty distribution} and if
$x=\paren{\tfrac{1}{d},\tfrac{1}{d},\cdots ,\tfrac{1}{d}}$ we call $x$ a \textit{maximal uncertainty distribution}. Notice that there is only one maximal uncertainty distribution and only one maximal certainty distribution to within a permutation. We now present our main definition
\cite{mu88,sbwch13,tr11}. An \textit{uncertainty function on} $\pscript _d$ is a map $f\colon\pscript _d\to\sqbrac{0,1}$ that satiisfies:
\begin{align}                
\label{eq23}
&f(x)=0\hbox{ if and only if }x\hbox{ is a maximal certainty distribution,}\\
\label{eq24}
&f(x)=1\hbox{ if and only if }x\hbox{ is a maximal uncertainty distribution,}\\
\label{eq25}
&f\hbox{ is symmetric},\\
\label{eq26}
&f\hbox{ is concave}.
\end{align}

The following theorem is useful.
\begin{thm}    
\label{thm21}
Let $f\colon\pscript _d\to\real$ be given by $f(x)=\sum\limits _{i=0}^{d-1}h(x_i)$ where $h\colon\sqbrac{0,1}\to\sqbrac{0,1}$. Then $f$ is an uncertainty function if
{\rm{(i)}}\enspace $h$ is concave,
{\rm{(ii)}}\enspace $h(0)=h(1)=0$,
{\rm{(iii)}}\enspace $h(\alpha )>0$ if $\alpha\ne 0,1$,
{\rm{(iv)}}\enspace $h\paren{\tfrac{1}{d}}=\tfrac{1}{d}$,
{\rm{(v)}}\enspace $\sum\limits _{i=0}^{d-1}h(x_i)=1$ implies $x_i=\tfrac{1}{d}$, $i=0,1,\ldots ,d-1$.
\end{thm}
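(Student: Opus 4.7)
The plan is to verify the four axioms \eqref{eq23}--\eqref{eq26} of an uncertainty function one by one, plus the range condition $f(\pscript_d)\subseteq[0,1]$, using the five hypotheses on $h$.

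The two easy pieces come first. Symmetry \eqref{eq25} is immediate because $f(x)=\sum_i h(x_i)$ depends only on the multiset of coordinates, so $f(\hhat(x))=f(x)$ for any permutation $\hhat$. Concavity \eqref{eq26} follows because each map $x\mapsto h(x_i)$ is a composition of a linear coordinate projection with the concave $h$, so it is concave on $\pscript_d$; the sum of concave functions is concave.

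Next I would nail down non-negativity and the zero set. From (i) and (ii), for any $\alpha\in[0,1]$ we can write $\alpha=\alpha\cdot 1+(1-\alpha)\cdot 0$ and use concavity of $h$ to obtain $h(\alpha)\ge \alpha h(1)+(1-\alpha)h(0)=0$, hence $h\ge 0$ on $[0,1]$ and therefore $f\ge 0$ on $\pscript_d$. Then $f(x)=0$ forces every summand $h(x_i)=0$; by (ii)--(iii), this happens exactly when each $x_i\in\{0,1\}$, and combined with $\sum_i x_i=1$ it means exactly one coordinate equals $1$, i.e.\ $x$ is a maximal certainty distribution. The converse direction uses (ii) again. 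This gives \eqref{eq23}.

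For the upper bound and \eqref{eq24} I would apply Jensen's inequality to the concave $h$ with weights $1/d$ at the points $x_0,\ldots,x_{d-1}$, getting
\begin{equation*}
\tfrac{1}{d}\sum_{i=0}^{d-1}h(x_i)\le h\!\paren{\tfrac{1}{d}\sum_{i=0}^{d-1}x_i}=h\!\paren{\tfrac{1}{d}}=\tfrac{1}{d},
\end{equation*}
using (i) and (iv). Hence $f(x)\le 1$, so $f\colon\pscript_d\to[0,1]$. Evaluating at the maximal uncertainty vector $(1/d,\ldots,1/d)$ gives $f=d\cdot h(1/d)=1$ by (iv), which is the easy direction of \eqref{eq24}. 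The reverse implication is exactly hypothesis (v): $f(x)=1$ forces $x=(1/d,\ldots,1/d)$.

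The only step with any real content is the upper bound $f\le 1$, which is a one-line application of Jensen; everything else is either direct substitution into a hypothesis or a routine concavity manipulation. Note that (v) is doing the nontrivial work for \eqref{eq24}: without it, one could imagine $h$ having enough slack to let the inequality $\sum h(x_i)\le 1$ be saturated at a non-uniform point, so this hypothesis is precisely what pins the maximum to the uniform distribution.
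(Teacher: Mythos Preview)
Your proof is correct and follows essentially the same approach as the paper: symmetry is immediate, concavity of $f$ comes from concavity of $h$ coordinatewise, \eqref{eq23} uses that $h(x_i)=0$ forces $x_i\in\{0,1\}$, the bound $f\le 1$ comes from Jensen with equal weights $1/d$ together with (iv), and \eqref{eq24} is handled by (iv) for the forward direction and (v) for the converse. Your explicit derivation of $h\ge 0$ from concavity and (ii) is a nice touch, though note that it is already built into the hypothesis $h\colon[0,1]\to[0,1]$.
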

\begin{proof}
It is clear that $f$ is symmetric. Also, $f$ is concave because by (i)
\begin{align*}
f\sqbrac{\lambda x+(1-\lambda )y}&=\sum _{i=0}^{d-1}h\sqbrac{\lambda x_i+(1-\lambda )y_i}
   \ge\sum _{i=0}^{d-1}\sqbrac{\lambda h(x_i)+(1-\lambda )h(y_i)}\\
   &=\lambda\sum _{i=0}^{d-1}h(x_i)+(1-\lambda )\sum _{i=0}^{d-1}h (y_i)=\lambda f(x)+(1-\lambda )f(y)
\end{align*}
To check \eqref{eq23}, suppose $x_i=1$ for some $i$. Then by (ii) $f(x)=f(1)=0$. Conversely, if $f(x)=0$, then
$\sum\limits _{i=0}^{d-1}h(x_i)=0$. Hence, $h(x_i)=0$, $i=0,1,\ldots ,d-1$ so by (ii) and (iii), $x_i=0$ or $1$, $i=0,1,\ldots ,d-1$. We conclude that $x_i=1$ for some $i$. To check \eqref{eq24}, suppose $y=\paren{\tfrac{1}{d},\tfrac{1}{d},\cdots ,\tfrac{1}{d}}$. Then by (iv)
\begin{equation*}
f(y)=\sum _{i=0}^{d-1}h(y_i)=\sum _{i=0}^{d-1}h\paren{\tfrac{1}{d}}=df\paren{\tfrac{1}{d}}=1
\end{equation*}
Conversely, if $f(y)=1$, then $\sum\limits _{i=0}^{d-1}h(y_i)=1$, so by (v), $y_i=\tfrac{1}{d}$. Finally, $f(x)\le 1$ for all $x\in\pscript _d$ because (i) implies
\begin{equation*}
h\paren{\sum _{i=0}^{d-1}\lambda _ix_i}\ge\sum _{i=0}^{d-1}\lambda _ih(x_i)
\end{equation*}
for all $\lambda _i\in\sqbrac{0,1}$ with $\sum\lambda _i=1$. Letting $\lambda _i=\tfrac{1}{d}$, $i=0,1,\ldots ,d-1$ gives
\begin{equation*}
\tfrac{1}{d}\sum h(x_i)\le h\paren{\tfrac{1}{d}\sum x_i}=h\paren{\tfrac{1}{d}}
\end{equation*}
Hence, by (iv) we obtain
\begin{equation*}
f(x)=\sum _{i=0}^{d-1}h(x_i)\le dh\paren{\tfrac{1}{d}}=1\qedhere
\end{equation*}
\end{proof}

\section{Examples of Uncertainty Functions}  
We now list some uncertainty functions. The first three are studied except for a multiple factor \cite{zt24} while the fourth appears to be new. We denote the natural logarithm function by $\ln$.
\begin{enumerate}[(a)]
\item \textit{Variance uncertainty}: $v(x)=\tfrac{d}{d-1}\paren{1-\sum\limits _{i=0}^{d-1}x_i^2}$
\item \textit{Entropy uncertainty}: $e(x)=-\tfrac{1}{\ln d}\sum\limits _{i=0}^{d-1} x_i\ln x_I$
\item \textit{Geometric uncertainty}: $g(x)=\tfrac{d}{d-1}(1-\max x_i)$
\item \textit{Sine uncertainty}: $s(x)=\tfrac{1}{d\sin (\pi /d)}\sum\limits _{i=0}^{d-1}\sin (\pi x_i)$
\end{enumerate}
We now show that these are indeed uncertainty functions.

\begin{thm}    
\label{thm31}
The maps $v,e,g$ and $s$ are uncertainty functions.
\end{thm}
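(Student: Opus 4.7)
The plan is to apply Theorem~\ref{thm21} to the variance, entropy, and sine cases, since each of these has the sum-separable form $f(x)=\sum_{i=0}^{d-1}h(x_i)$ and so reduces to checking five scalar conditions on the one-variable function $h$; the geometric case $g$ is not of this form (it depends on $\max x_i$), so I would verify the four defining axioms of an uncertainty function directly.

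For case (a), I would use $\sum x_i=1$ to rewrite $v(x)=\sum_{i=0}^{d-1}h(x_i)$ with $h(t)=\tfrac{d}{d-1}\,t(1-t)$: this is a downward parabola, hence concave, vanishes exactly at $t=0,1$, is positive on $(0,1)$, and $h(1/d)=\tfrac{d}{d-1}\cdot\tfrac{1}{d}\cdot\tfrac{d-1}{d}=\tfrac{1}{d}$. Condition (v) of Theorem~\ref{thm21} then follows from the strict concavity of $h$ via the equality case of Jensen's inequality. For case (b), take $h(t)=-t\ln t/\ln d$ with $h(0)=0$ by convention; the standard properties of $-t\ln t$ yield strict concavity on $[0,1]$, $h(\alpha)>0$ on $(0,1)$, and $h(1/d)=\tfrac{(1/d)\ln d}{\ln d}=\tfrac{1}{d}$. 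For case (d), take $h(t)=\sin (\pi t)/\paren{d\sin (\pi /d)}$; then $h''(t)=-\pi ^2\sin (\pi t)/\paren{d\sin (\pi /d)}\le 0$ on $[0,1]$, with strict inequality on $(0,1)$, so $h$ is concave and strictly concave on the interior; the boundary values, positivity on $(0,1)$, and $h(1/d)=\tfrac{1}{d}$ are immediate.

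For case (c), I would verify the axioms \eqref{eq23}--\eqref{eq26} directly. Symmetry is clear since $\max x_i$ is permutation-invariant. Concavity follows because $x\mapsto\max _ix_i$ is a pointwise maximum of the coordinate functionals, hence convex, so its negative is concave. The maximal certainty axiom holds because $g(x)=0$ iff $\max x_i=1$ iff some $x_i=1$. For the maximal uncertainty axiom, the constraint $\sum x_i=1$ forces $\max x_i\ge\tfrac{1}{d}$, with equality iff $x_i=\tfrac{1}{d}$ for all $i$; thus $g(x)\le 1$ with equality exactly in the uniform case.

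The step most in need of care is condition (v) of Theorem~\ref{thm21} in cases (a), (b), and (d): one needs the equality clause of Jensen's inequality, which requires \emph{strict} concavity of $h$ on $(0,1)$ to conclude that $\sum h(x_i)=dh(1/d)$ forces all $x_i=\tfrac{1}{d}$. Once strict concavity is in hand (from $-2<0$, from $(-t\ln t)''=-1/t<0$, and from $h''<0$ on $(0,1)$ respectively), this is the only nonroutine input, and everything else reduces to direct computation.
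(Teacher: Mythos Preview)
Your proposal is correct and follows the same overall architecture as the paper: apply Theorem~\ref{thm21} to $v$, $e$, $s$ via the separable scalar functions $h(t)=\tfrac{d}{d-1}t(1-t)$, $-t\ln t/\ln d$, and $\sin(\pi t)/(d\sin(\pi/d))$, and verify the four axioms for $g$ directly.

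The one genuine difference is how condition~(v) of Theorem~\ref{thm21} is established. You handle all three cases uniformly via the equality clause of Jensen's inequality, relying on strict concavity of $h$ (from $h''<0$ on the interior, hence $h'$ strictly decreasing on $[0,1]$) to force $x_i=\tfrac{1}{d}$ for all $i$ whenever $\sum h(x_i)=dh(1/d)$. The paper instead treats each case by a separate device: for $v$ it rewrites $\sum x_i^2=\tfrac{1}{d}$ and invokes the equality case of the Cauchy--Schwarz inequality with the all-ones vector; for $e$ it simply appeals to the entropy function being well known; and for $s$ it argues that $s(x)=1$ is the global maximum and applies Lagrange multipliers to get $\cos(\pi x_i)=\cos(\pi x_j)$. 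Your route is more economical and avoids the boundary-point issue that a Lagrange-multiplier argument would otherwise need to address. For $g$, your one-line observation that $\max_i x_i$ is convex as a pointwise maximum of coordinate functionals is equivalent to, but shorter than, the paper's explicit index-chasing computation. One small omission: Theorem~\ref{thm21} requires $h\colon[0,1]\to[0,1]$, so you should record that each $h$ has maximum at most $1$ (the paper checks this for $v$; for $e$ and $s$ the maxima $1/(e\ln d)$ and $1/(d\sin(\pi/d))$ are also $\le 1$ for $d\ge 2$).
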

\begin{proof}
We can write $v(x)=\sum\limits _{i=0}^{d-1}h(x_i)$, $e(x)=\sum\limits _{i=0}^{d-1}h_1(x_i)$,\newline
$s(x)=\sum\limits _{i=0}^{d-1}h_2(x_i)$ where
$h, h_1,h_2\colon\sqbrac{0,1}\to\sqbrac{0,1}$ are given by\newline
 $h(\alpha )=\tfrac{d}{d-1}(\alpha -\alpha ^2)$, $h_1(\alpha )=-\tfrac{1}{\ln d}\alpha\ln\alpha$,
 $h_2(\alpha )=\tfrac{1}{d\sin (\pi /d)}\sin (\pi\alpha )$.
 We now apply Theorem~\ref{thm21} to show that $v,e$ and $s$ are uncertainty functions. Since the derivatives
$h'(\alpha )=\tfrac{d}{d-1}(1-2d)$, $h''(\alpha )=-\tfrac{2d}{d-1}<0$ we conclude that $h$ is concave and $h$ has its maximum at
$\alpha =1/2$. Since $h\paren{\tfrac{1}{2}}=\tfrac{d}{4(d-1)}$ and $d>\tfrac{4}{3}$ we obtain 
\begin{equation*}
\tfrac{d-1}{d}=1-\tfrac{1}{d}>1-\tfrac{3}{4}=\tfrac{1}{4}
\end{equation*}
so $h(\alpha )\le h\paren{\tfrac{1}{2}}\le 1$ as required. Clearly, (ii) and (iii) of Theorem~\ref{thm21} hold for $h$. Moreover,
\begin{equation*}
h\paren{\tfrac{1}{d}}=\tfrac{d}{d-1}\paren{\tfrac{1}{d}-\tfrac{1}{d^2}}=\tfrac{d}{d-1}\paren{\tfrac{d-1}{d^2}}=\tfrac{1}{d}
\end{equation*}
so (iv) holds. To show that (v) holds, suppose
\begin{equation*}
\tfrac{d}{d-1}\paren{1-\sum _{i=0}^{d-1}x_i^2}=\sum _{i=0}^{d-1}h(x_i)=1
\end{equation*}
which implies
\begin{equation*}
\sum _{i=0}^{d-1}x_i^2=1-\tfrac{d-1}{d}=\tfrac{1}{d}
\end{equation*}
Letting $\beta =(1,1,\ldots ,1)$, the inner product with $x$ becomes
\begin{equation*}
\elbows{x,\beta}=\sum _{i=0}^{d-1}x_i\tbullet 1=1=\paren{\sum _{i=0}^{d-1}x_i^2}^{1/2}d^{1/2}=\doubleab{x}\,\doubleab{\beta}
\end{equation*}
Since we have equality in Schwarz's inequality, we conclude that $x=\lambda\beta$ for some $\lambda\in\real$. Hence, $x_i=x_j$ for all $i,j$ and since $\sum\limits _{i=0}^{d-1}x_i=1$, it follows that $x_i=\tfrac{1}{d}$, $i=0,1,\ldots ,d-1$. Therefore, (v) holds.

The function $h_1$ is the well studied entropy function and it is well known that $h_1$ satisfies the conditions of Theorem~\ref{thm21}. Hence, $e$ is an uncertainty function. We next consider $h_2$. Since $h''_2(\alpha )\le 0$ for all $\alpha\in\sqbrac{0,1}$, $h_2$ is concave so (i) holds. It is clear that (ii), (iii) and (iv) hold for $h_2$. To show that (v) holds for $h_2$, suppose $s(x)=\sum\limits _{i-0}^{d-1}h_2(x_i)=1$. As in the proof of Theorem~\ref{thm21}, $s(x)\le 1$ for all $x\in\pscript _d$. Since $s(x)=1$, we have that $s(x)$ is a maximum for $s$. Employing the method of Lagrange multipliers, let
\begin{equation*}
\lscript (x,\lambda )=s(x)+\lambda g(x)
\end{equation*}
for $x\in\real ^d$, where $g(x)=1-\sum\limits _{i=0}^{d-1}x_i$. Then $s(x)$ attains a maximum if
$\tfrac{\partial s(x)}{\partial x_i}+\tfrac{\partial g(x)}{\partial x_i}=0$ for $i=0,1,\ldots ,d-1$. Now
\begin{equation*}
0=\tfrac{\partial s(x)}{\partial x_i}+\tfrac{\lambda\partial g(x)}{\partial x_i}=\tfrac{\pi}{d\sin (\pi /d)}\cos (\pi x_i)-\lambda
\end{equation*}
Hence, $cos (\pi x_i)=\tfrac{\lambda\sin (\pi /d)}{\pi}$ so $\cos (\pi x_i)=\cos (\pi x_j)$ for all $i,j=0,1,\ldots ,d-1$. This implies that $x_i=x_j$ so $x_i=\tfrac{1}{d}$, $i=0,1,\ldots ,d-1$. Hence, (v) holds.

Finally, we show that $g(x)=\tfrac{d}{d-1}(1-\max x_i)$ is an uncertainty function. To show that $g$ is concave, let
\begin{equation*}
\lambda x_j+(1-\lambda )y_j=\max\sqbrac{\lambda x_i+(1-\lambda )y_i}
\end{equation*}
We then have
\begin{align*}
g\sqbrac{\lambda x+(1-\lambda )y}&=\tfrac{d}{d-1}\sqbrac{1-\max (\lambda x_i+(1-\lambda )y_i)}\\
   &=\tfrac{d}{d-1}\sqbrac{1-(\lambda x_j+(1-\lambda )y_j)}
\end{align*}
Since $\max x_i\ge x_j$ and $\max y_i\ge y_j$ we obtain
\begin{align*}
g(x)&=\tfrac{d}{d-1}(1-\max x_i)\le\tfrac{d}{d-1}(1-x_j)\\
\intertext{and}
g(y)&=\tfrac{d}{d-1}(1-\max y_i)\le\tfrac{d}{d-1}(1-y_j)
\end{align*}
Hence, $x_j\le 1-\tfrac{d-1}{d}g(x)$ and $y_j\le 1-\tfrac{d-1}{d}g(y)$. Therefore,
\begin{align*}
\lambda g(x)+(1-\lambda )g(y)&\le\lambda\tfrac{d}{d-1}(1-x_j)+(1-\lambda )\tfrac{d}{d-1}(1-y_j)\\
   &=\tfrac{d}{d-1}\sqbrac{1-(\lambda x_j+(1-\lambda )y_j)}=g(\lambda x+(1-\lambda )y)
\end{align*}
so $g$ is concave. To show that $g$ satisfies \eqref{eq23}, if $x_i=1$ for some $i\in\brac{0,1,\ldots ,d-1}$ then $x_j=0$ for $j\ne i$. Hence, $g(x)=\tfrac{d}{d-1}(1-\max x_i)=0$. Conversely, if $g(x)=0$, then $1-\max x_i=1-x_j=0$. Hence, $x_j=\max x_i=1$ so $x_i=1$ for some $i$. To show that $g$ satisfies \eqref{eq24}, if $y=\paren{\tfrac{1}{d},\tfrac{1}{d},\cdots ,\tfrac{1}{d}}$ then
$g(y)=\tfrac{d}{d-1}\paren{1-\tfrac{1}{d}}=1$. Conversely, if $g(x)=1$, then $\tfrac{d}{d-1}(1-\max x_i)=1$ so $\max x_i=\tfrac{1}{d}$. If
$x_j<\tfrac{1}{d}$ for some $j\in\brac{0,1,\ldots ,d-1}$, then $\sum\limits _{i=0}^{d-1}x_i<1$ which is a contradiction. Hence,
$x=\paren{\tfrac{1}{d},\tfrac{1}{d},\cdots ,\tfrac{1}{d}}$. Clearly, $g$ is symmetric. Finally, since $\sum\limits _{i=0}^{d-1}x_i=1$, we have
$\max x_i\ge\tfrac{1}{d}$. Hence,
\begin{equation*}
1-\max x_i\le 1-\tfrac{1}{d}=\tfrac{d-1}{d}
\end{equation*}
Therefore, $g(x)=\tfrac{d}{d-1} (1-\max x_i)\le 1$.
\end{proof}

As in the proof of Equation~\eqref{eq22}, an uncertainty function $f\colon\pscript _d\to\sqbrac{0,1}$ satisfies
$f\paren{\sum\limits _{i=0}^{d-1}\lambda _iz_i}\ge\sum\limits _{i=0}^{d-1}\lambda _if(z_i)$ where $\lambda _i\in\sqbrac{0,1}$,
$\sum\limits _{i=0}^{d-1}\lambda _i=1$ and $z_i\in\pscript _d$, $i=0,1,\ldots ,d-1$. We now show that we can combine uncertainty functions to obtain new uncertainty functions. For example, if $\lambda\in\sqbrac{0,1}$, then we can form the convex combination
$\lambda v+(1-\lambda )e$. More generally, we have the following theorem.

\begin{thm}    
\label{thm32}
If $f_1,f_2,\ldots ,f_n$ are uncertainty functions on $\pscript _d$ and $0<\lambda _i\le 1$ with $\sum\limits _{i=1}^n\lambda _i=1$, then $f=\sum\limits _{i=1}^n\lambda _if_i$ is an uncertainty function on $\pscript _d$.
\end{thm}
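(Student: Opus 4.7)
The plan is to verify each of the four defining conditions \eqref{eq23}--\eqref{eq26} for $f=\sum_{i=1}^n\lambda_if_i$, together with the range condition $f\colon\pscript _d\to\sqbrac{0,1}$. That $f$ takes values in $\sqbrac{0,1}$ is immediate: since each $f_i(x)\in\sqbrac{0,1}$ and the $\lambda_i$ are nonnegative weights summing to $1$, $f(x)$ is a convex combination of numbers in $\sqbrac{0,1}$ and therefore lies in $\sqbrac{0,1}$.

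Symmetry \eqref{eq25} would be inherited term by term: for any permutation $\hhat$ one has $f\sqbrac{\hhat (x)}=\sum\lambda_if_i\sqbrac{\hhat (x)}=\sum\lambda_if_i(x)=f(x)$, using the symmetry of each $f_i$. Concavity \eqref{eq26} is equally routine, since a nonnegative linear combination of concave functions is concave; I would apply concavity of each $f_i$ to $\lambda x+(1-\lambda )y$ and then collect the terms weighted by $\lambda_i$.

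The more interesting steps are \eqref{eq23} and \eqref{eq24}, and here the strict positivity $\lambda_i>0$ is essential. For \eqref{eq23}, if $x$ is a maximal certainty distribution then each $f_i(x)=0$ by the axiom applied to $f_i$, so $f(x)=0$; conversely, if $f(x)=0$ then $\sum\lambda_if_i(x)=0$ is a sum of nonnegative terms, so every summand must vanish, and because $\lambda_i>0$ we obtain $f_i(x)=0$ for every $i$, whence (using, say, the axiom \eqref{eq23} for $f_1$) $x$ is a maximal certainty distribution. For \eqref{eq24} the argument is dual: if $x$ is the maximal uncertainty distribution then each $f_i(x)=1$ and so $f(x)=\sum\lambda_i=1$; conversely, I would rewrite $f(x)=1=\sum\lambda_i$ as $\sum\lambda_i\sqbrac{1-f_i(x)}=0$, a sum of nonnegative terms, which forces $f_i(x)=1$ for every $i$ and hence $x=\paren{\tfrac{1}{d},\tfrac{1}{d},\ldots,\tfrac{1}{d}}$ by \eqref{eq24} applied to any one $f_i$.

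I do not expect any real obstacle; the only subtlety to flag is the hypothesis $\lambda_i>0$, because allowing some weight to vanish would break the ``only if'' directions of \eqref{eq23} and \eqref{eq24} (an $f_i$ with zero weight would no longer be forced to attain $0$ or $1$ at $x$). With that remark, the proof reduces entirely to the four short verifications above.
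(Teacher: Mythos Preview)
Your proposal is correct and follows essentially the same route as the paper's own proof: range, symmetry, and concavity are checked termwise, and the ``only if'' directions of \eqref{eq23} and \eqref{eq24} are obtained by observing that a vanishing sum of nonnegative terms forces each term to vanish (using $\lambda_i>0$). Your rewriting of the converse in \eqref{eq24} as $\sum_i\lambda_i\sqbrac{1-f_i(x)}=0$ is slightly more explicit than the paper's one-line ``$\sum\lambda_if_i(x)=1$ so $f_i(x)=1$ for all $i$'', but the underlying argument is identical.
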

\begin{proof}
First, it is clear that $0\le f(x)\le 1$ for all $x\in\pscript _d$. To show that \eqref{eq23} holds, suppose $f(x)=0$. Then
$\sum\limits _{i=1}^n\lambda _if_i(x)=0$ implies that $f_i(x)=0$ for all $i\in\brac{1,2,\ldots ,n}$. But then $x_j=1$ for some $j\in\brac{1,2,\ldots ,n}$. Conversely, if $x_j=1$ for some $j\in\brac{1,2,\ldots ,n}$ then $x_k=0$ for $k\ne j$ and $f_i(x)=0$ for all $i\in\brac{1,2,\ldots ,n}$. Hence,
\begin{equation*}
f(x)=\sum _{i=1}^n\lambda _if_i(x)=\sum _{i=0}^n\lambda _i\tbullet 0=0
\end{equation*}
To show that \eqref{eq24} holds, suppose $x=\brac{\tfrac{1}{d},\tfrac{1}{d},\cdots ,\tfrac{1}{d}}$. Then $f_i(x)=1$ for all
$i\in\brac{1,2,\ldots ,n}$ so
\begin{equation*}
f(x)=\sum _{i=1}^n\lambda _if_i(x)=\sum _{i=1}^n\lambda _i=1
\end{equation*}
Conversely, suppose $f(x)=1$. Then $\sum\limits _{i=1}^n\lambda _if_i(x)=1$ so $f_i(x)=1$ for all $i\in\brac{1,2,\ldots ,n}$. Hence,
$x=\brac{\tfrac{1}{d},\tfrac{1}{d},\cdots ,\tfrac{1}{d}}$. To show that \eqref{eq25} holds, let $\hhat\colon\pscript _d\to\pscript _d$ be a permutation. Then
\begin{equation*}
f\paren{\hhat (x_i)}=\sum _{i=1}^n\lambda _if_i\paren{\hhat (x)}=\sum _{i=1}^n\lambda _if_i(x)=f(x)
\end{equation*}
Finally, to show that \eqref{eq26} holds, for $\lambda\in\sqbrac{0,1}$ we obtain
\begin{align*}
f\sqbrac{\lambda x+(1-\lambda )y}&=\sum _{i=1}^n\lambda _if_i\sqbrac{\lambda x+(1-\lambda )y}
   \ge\sum _{i=1}^n\lambda _i\sqbrac{\lambda f_i(x)+(1-\lambda )f_i(y)}\\
   &=\lambda\sum _{i=1}^n\lambda _if_i(x)+(1-\lambda )\sum _{i=1}^n\lambda _if_i(y)\\
   &=\lambda f(x)+(1-\lambda )f(y)\qedhere
\end{align*}
\end{proof}

\section{Uncertainty Measures}  
Let $A=\brac{A_i\colon i=0,1,\ldots ,d-1}$ be an observable on $H$. If $A_i$, $i=0,1,\dots ,d-1$, are projections we call $A$ a
\textit{projective observable} and if $A_i$ are rank one projections we call $A$ an \textit{atomic projective observable}. In this latter case we can write $A_i=\ket{i}\bra{i}$ where $\ket{i}$ is an orthonormal basis for $H$. If $A\in\rmob (H)$ and $\rho\in\sscript (H)$ we have the probability distribution $P_\rho ^A(i)=\trace\sqbrac{\rho (A_i)}$, $i=0,1,\ldots ,d-1$, so $P_\rho ^A\in\pscript _d$. If
$f\colon\pscript _d\to\sqbrac{0,1}$ is an uncertainty function, we define the $(f,A)$-\textit{uncertainty measure}
$U_{(f,A)}\colon\sscript (H)\to\sqbrac{0,1}$ by $U_{(f,A)}(\rho )=f(P_\rho ^A)$. For example, if $f=v$ we have
\begin{equation*}
U_{(v,A)}(\rho )=v(P_\rho ^A)=\tfrac{d}{d-1}\sqbrac{1-\sum _{i=0}^{d-1}\paren{P_\rho ^A(i)}^2}
   =\tfrac{d}{d-1}\sqbrac{1-\sum _{i=0}^{d-1}\paren{\trace (\rho A_i)}^2}
\end{equation*}
In a similar way
\begin{align*}
U_{(e,A)}(\rho )&=e(P_\rho ^A)=-\tfrac{1}{\ln d}\sum _{i=0}^{d-1}\trace (\rho A_i)\ln\paren{\trace (\rho A_i)}\\
U_{(g,A)}(\rho )&=g(P_\rho ^A)=\tfrac{d}{d-1}\brac{1-\max\sqbrac{\trace (\rho A_i)}}\\
U_{(s,A)}(\rho )&=s(P_\rho ^A)=\tfrac{1}{d\sin (\pi /d)}\sum _{i=0}^{d-1}\sin\sqbrac{\pi\trace (\rho A_i)}
\end{align*}

We now show why $U_{(v,A)}$ is called a variance uncertainty measure. If $a\in\escript (H)$, $\rho\in\sscript (H)$, the $\rho$-\textit{variance} of $a$ is the average variation $\Var (a,\rho )=\trace\sqbrac{\rho (a-\trace (\rho a)I)^2}$ from $a$ to its $\rho$-expectation $\trace (\rho a)I$. We write $V(a,\rho )$ in a simpler form
\begin{align*}
\Var (a,\rho )&=\trace\brac{\rho\paren{a^2-2a\trace (\rho a)+\sqbrac{\trace (\rho a)}^2I}}\\
   &=\trace (\rho a^2)-\sqbrac{\trace (\rho a)}^2
\end{align*}
If $A$ is a projective observable, then
\begin{equation*}
\Var (A_i,\rho )=\trace (\rho A_i)-\sqbrac{\trace (\rho A_i)}^2
\end{equation*}
We then have
\begin{align*}
U_{(v,A)}(\rho )&=\tfrac{d}{d-1}\brac{\sum _{i=0}^{d-1}\trace (\rho A_i)-\sum _{i=0}^{d-1}\sqbrac{\trace (\rho A_i)}^2}\\
   &=\tfrac{d}{d-1}\brac{\sum _{i=0}^{d-1}\sqbrac{\trace (\rho A_i)-\paren{\trace (\rho A_i)}^2}}
   =\tfrac{d}{d-1}\sqbrac{\sum _{i=0}^{d-1}\Var (A_i,\rho )}
\end{align*}
which writes $U_{(v,A)}(\rho )$ in terms of the $\rho$-variances of $A_i$.

The next result gives the properties of an uncertainty measure.

\begin{thm}    
\label{thm41}
If $U_{(f,A)}$ is an uncertainty measure, then $0\le U_{(f,A)}(\rho )\le 1$ and we have
{\rm{(i)}}\enspace $U_{(f,A)}\sqbrac{\lambda\rho _1+(1-\lambda )\rho _2}\ge\lambda U_{(f,A)}(\rho _1)+(1-\lambda )U_{(f,A)}(\rho _2)$ for all
$\lambda\in\sqbrac{0,1}$,\newline
{\rm{(ii)}}\enspace $U_{(f,A)}(\rho )=0$ if and only if $\trace (\rho A_i)=1$ for some $i\in\brac{0,1,\ldots ,d-1}$,\newline
{\rm{(iii)}}\enspace $U_{(f,A)}(\rho )=1$ if and only if $\trace (\rho A_i)=\tfrac{1}{d}$ for all $i=\brac{0,1,\ldots ,d-1}$.
\end{thm}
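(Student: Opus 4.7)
The plan is to reduce everything to the uncertainty function axioms established in Section~2 by exploiting the observation that the map $\rho\mapsto P_\rho^A$ from $\sscript(H)$ to $\pscript_d$ is affine. The bounds $0\le U_{(f,A)}(\rho)\le 1$ are then immediate from the definition $U_{(f,A)}(\rho)=f(P_\rho^A)$ together with $P_\rho^A\in\pscript_d$ and the fact that $f$ takes values in $\sqbrac{0,1}$.

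The key computation for (i) is to establish this affinity via linearity of the trace: for each outcome $i$,
\begin{equation*}
P_{\lambda\rho_1+(1-\lambda)\rho_2}^A(i)=\trace\sqbrac{\paren{\lambda\rho_1+(1-\lambda)\rho_2}A_i}=\lambda P_{\rho_1}^A(i)+(1-\lambda)P_{\rho_2}^A(i),
\end{equation*}
so $P_{\lambda\rho_1+(1-\lambda)\rho_2}^A=\lambda P_{\rho_1}^A+(1-\lambda)P_{\rho_2}^A$ in $\pscript_d$. Applying the concavity axiom \eqref{eq26} of $f$ to these two probability vectors gives (i) directly.

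For (ii) and (iii), I would translate axioms \eqref{eq23} and \eqref{eq24} along this same correspondence. Specifically, $U_{(f,A)}(\rho)=0$ iff $f(P_\rho^A)=0$ iff $P_\rho^A$ is a maximal certainty distribution, which by definition is equivalent to $P_\rho^A(i)=\trace(\rho A_i)=1$ for some $i$; and $U_{(f,A)}(\rho)=1$ iff $f(P_\rho^A)=1$ iff $P_\rho^A$ is the (unique) maximal uncertainty distribution, which is equivalent to $\trace(\rho A_i)=\tfrac{1}{d}$ for all $i\in\brac{0,1,\ldots,d-1}$. No step poses any real obstacle---the theorem is essentially a transport of the four uncertainty function axioms from $\pscript_d$ to $\sscript(H)$ through the affine map $\rho\mapsto P_\rho^A$.
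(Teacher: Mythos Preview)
Your proposal is correct and follows essentially the same route as the paper: the paper likewise notes that the bounds are clear, then for (i) explicitly writes out the probability vectors $x=P_{\rho_1}^A$ and $y=P_{\rho_2}^A$, uses linearity of the trace to identify $P_{\lambda\rho_1+(1-\lambda)\rho_2}^A$ with $\lambda x+(1-\lambda)y$, and applies concavity of $f$; for (ii) and (iii) it simply invokes axioms \eqref{eq23} and \eqref{eq24} exactly as you do.
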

\begin{proof}
It is clear that $0\le U_{(f,A)}(\rho )\le 1$. To prove (i), let
\begin{align*}
x&=\sqbrac{\trace (\rho _1A_0),\trace (\rho _1A_1),\ldots ,\trace (\rho _1A_{d-1})}\\
y&=\sqbrac{\trace (\rho _2A_0),\trace (\rho _2A_1),\ldots ,\trace (\rho _2A_{d-1})}
\end{align*}
Since $f$ is concave we obtain
\begin{align*}
&U_{(f,A)}\sqbrac{\lambda\rho _1+(1-\lambda )\rho _2}\\
   &=f\brac{\trace\sqbrac{(\lambda\rho _1+(1-\lambda )\rho _2)(A_0)},\ldots,\trace\sqbrac{(\lambda\rho _1+(1-\lambda )\rho _2)(A_{d-1})}}\\
   &=f\brac{\sqbrac{\lambda\trace (\rho _1A_0)+(1-\lambda )\trace (\rho _2A_0)},
   \ldots ,\sqbrac{\lambda\trace (\rho _1A_{d-1})+(1-\lambda )\trace(\rho _2A_{d-1})}}\\
   &=f\sqbrac{\lambda x+(1-\lambda )y}\ge\lambda f(x)+(1-\lambda )f(y)=\lambda U_{(f,A)}(\rho _1)+(1-\lambda )U_{(f,A)}(\rho _2)
\end{align*}
for all $\lambda\in\sqbrac{0,1}$. To prove (ii), applying \eqref{eq23} we obtain $U_{(f,A)}(\rho )=0$ if and only if
\begin{equation*}
f\sqbrac{\trace (\rho A_0),\trace (\rho A_1),\ldots ,\trace (\rho A_{d-1})}=0
\end{equation*}
which holds if and only if $\trace (\rho A_i)=1$ for some $i\in\brac{0,1,\ldots ,d-1}$. To prove (iii), applying \eqref{eq24} we obtain
$U_{(f,A)}(\rho )=1$ if and only if
\begin{equation*}
f\sqbrac{\trace (\rho A_0),\trace (\rho A_1),\ldots ,\trace (\rho A_{d-1})}=1
\end{equation*}
which holds if and only if $\trace (\rho A_i)=\tfrac{1}{d}$ for all $i\in\brac{0,1,\dots ,d-1}$.
\end{proof}

As with an earlier argument, it follows from (i) that
\begin{equation*}
U_{(f,A)}\paren{\sum _{i=1}^n\lambda _i\rho _i}\ge\sum _{i=1}^n\lambda _iU_{(f,A)}(\rho _i)
\end{equation*}
for $\lambda _i\in\sqbrac{0,1}$ with $\sum\limits _{i=1}^n\lambda _i=1$.

In \cite{zt24}, the measuring observable is always the atomic, projective observable $A=\brac{\ket{i}\bra{i}\colon i=0,1,\ldots ,d-1}$. However, we contend that this is too restrictive and other observables should also be used. For example, let $A=\brac{A_0,A_1}$ with $A_1=I-A_0$ be an arbitrary observable on the qubit Hilbert space $H=\complex ^2$. Then for any $\rho\in\sscript (H)$ we have the probability distribution
$P_\rho ^A(i)=\trace (\rho A_i)$ $i=0,1$. Consider the mixed state $\rho =\tfrac{1}{2}\paren{\ket{0}\bra{0}+\ket{1}\bra{1}}$ and the pure state
$\ket{\psi}\bra{\psi}$ where $\psi =\tfrac{1}{\sqrt{2}}(\ket{0}+\ket{1})$. We then have for $i=0,1$ that
\begin{align*}
P_\rho ^A(i)&=\trace (\rho A_i)=\tfrac{1}{2}\trace\sqbrac{\paren{\ket{0}\bra{0}+\ket{1}\bra{1}}A_i}\\
   &=\tfrac{1}{2}\sqbrac{\bra{0}A_i\ket{0}+\bra{1}A_i\ket{1}}=\tfrac{1}{2}\trace (A_i)\\
\intertext{and}
P_{\ket{\psi}\bra{\psi}}^A(i)&=\trace\paren{\ket{\psi}\bra{\psi}A_i}=\tfrac{1}{2}\sqbrac{\paren{\ket{0}+\ket{1}}\paren{\ket{0}+\ket{1}}A_i}\\
   &=\tfrac{1}{2}\sqbrac{\bra{0}A_i\ket{0}+\bra{1}A_i\ket{1}+2\rmre\bra{0}A_i\ket{1}}\\
   &=\tfrac{1}{2}\trace (A_i)+\rmre\bra{0}A_i\ket{i}
\end{align*}
We then obtain
\begin{align*}
P_\rho ^A(0)&=\tfrac{1}{2}\trace (A_0),\quad P_\rho ^A(1)=1-\tfrac{1}{2}\trace (A_0)\\
P_{\ket{\psi}\bra{\psi}}^A(0)&=\tfrac{1}{2}\trace (A_0)+\rmre\bra{0}A_0\ket{1},\\
P_{\ket{\psi}\bra{\psi}}^A(1)&=1-\tfrac{1}{2}\trace (A_0)-\rmre\bra{0}A_0\ket{1}
\end{align*}
If $A=\brac{\ket{0}\bra{0},\ket{1}\bra{1}}$ is an atomic, projection observable, then
\begin{equation*}
P_\rho ^A(0)=P_\rho ^A(1)=P_{\ket{\psi}\bra{\psi}}^A(0)=P_{\ket{\psi}\bra{\psi}}^A(1)=1/2
\end{equation*}
so $A$ does not distinguish the two different states $\rho$ and $\ket{\psi}\bra{\psi}$. Also, if $A_0$ is diagonal, then
$\bra{0}A_0\ket{1}=0$, so $A$ does not distinguish $\rho$ and $\ket{\psi}\bra{\psi}$. However, if $A$ satisfies $\rmre\bra{0}A\ket{1}\ne 0$, it does distinguish the states $\rho$ and $\ket{\psi}\bra{\psi}$.

\begin{example}  
Let $H=\complex ^2$ be the qubit Hilbert space and let $\rho$, $\ket{\psi}\bra{\psi}$ be the states just considered. Define the projective observable $A=\brac{A_0,A_1}$ by $A_0=\tfrac{1}{2}\left[\begin{smallmatrix}\smallskip 1&1\\1&1\end{smallmatrix}\right]$,
$A_1=\tfrac{1}{2}\left[\begin{smallmatrix}\smallskip 1&-1\\-1&1\end{smallmatrix}\right]$. Then $\rmre\bra{0}A_0\ket{1}=1/2$ and we have
\begin{equation*}
P_\rho ^A(0)=\tfrac{1}{2}\trace (A_0)=1/2=P_\rho ^A(1)
\end{equation*}
Thus, $P_\rho ^A$ is a maximal uncertainty distribution. We also have 
\begin{equation*}
P_{\ket{\psi}\bra{\psi}}^A(0)=\tfrac{1}{2}\trace (A_0)+\rmre\bra{0}A_0\ket{1}=1
\end{equation*}
and $P_{\ket{\psi}\bra{\psi}}(1)=0$ so $P_{\ket{\psi}\bra{\psi}}^A$ is a maximal certainty distribution, We conclude that $P_\rho ^A$ and
$P_{\ket{\psi}\bra{\psi}}^A$ are exact opposites.\hfill\qedsymbol
\end{example}

\begin{example}  
In Example~1, let $A_0=\tfrac{1}{2}\left[\begin{smallmatrix}\smallskip 1&1/3\\1/3&1\end{smallmatrix}\right]$,
$A_1=\tfrac{1}{2}\left[\begin{smallmatrix}\smallskip 1&-1/3\\-1/3&1\end{smallmatrix}\right]$.
Then $\rmre\bra{0}A_0\ket{1}=1/6$ and we have
\begin{equation*}
P_\rho ^A(0)=P_\rho ^A(1)=1/2
\end{equation*}
as before. Moreover,
\begin{equation*}
P_{\ket{\psi}\bra{\psi}}^A(0)=\tfrac{1}{2}+\tfrac{1}{6}=\tfrac{2}{3},\quad P_{\ket{\psi}\bra{\psi}}^A(1)=\tfrac{1}{3}\hskip 13pc\square
\end{equation*}
\end{example}

\begin{example} 
We now apply our standard uncertainty measures to the previous two examples. In Example~1, we let $x=(1/2,1/2)$, $y=(1,0)$ and obtain $v(x)=e(x)=g(x)=s(x)=1$ and $v(y)=e(y)=g(y)=s(y)=0$. In Example~2, we let $x=(1/2,1/2)$, $z=(2/3,1/3)$. Again, we have
$v(x)=e(x)=g(x)=s(x)=1$. However,
\begin{align*}
v(z)&=2(1-z_0^2-z_1^2)=2\paren{1-\tfrac{4}{9}-\tfrac{1}{9}}=\tfrac{8}{9}=0.889\\
e(z)&=\tfrac{1}{\ln 2}(z_0\ln z_0+z_1\ln z_1)=-\tfrac{1}{\ln 2}\paren{\tfrac{2}{3}\ln\tfrac{2}{3}+\tfrac{1}{3}\ln\tfrac{1}{3}}\\
   &=-\tfrac{1}{\ln 2}\paren{\tfrac{2}{3}\ln 2-\ln 3}=0.9184\\
   g(z)&=2\paren{1-max\brac{z_0,z_1}}=2\paren{1-\tfrac{2}{3}}=\tfrac{2}{3}=0.667\\
   s(z)&=\tfrac{1}{2\sin (\pi /2)}\sqbrac{\sin\paren{\tfrac{2}{3}\pi}+\sin\paren{\tfrac{\pi}{3}}}=\tfrac{1}{2}=0.500\hskip 5pc\square
\end{align*}
\end{example}

For $A\in\rmob (H)$, then $\rho\in\sscript (H)$ is an $A$-\textit{maximal uncertainty state} if $\trace (\rho A_i)=\tfrac{1}{d}$,
$i=0,1,\ldots ,d-1$. Notice that $\rho$ is an $A$-maximal uncertainty state if and only if $U_{(f,A)}(\rho )=1$ for every uncertainty function
$f\colon\pscript _d\to\sqbrac{0,1}$.

\begin{thm}    
\label{thm42}
If $A=\rmob (H)$ is an atomic, projective observable $A=\brac{\ket{0}\bra{0},\ket{1}\bra{1},\ldots ,\ket{d-1}\bra{d-1}}$, then
$\rho\in\sscript (H)$ is an $A$-maximal uncertainty state if and only if $\rho =\tfrac{1}{d}I+T$ where $T\in\lscript (H)$ satisfies
$\bra{r}T\ket{r}=0$ for all $r\in\brac{0,1,\ldots ,d-1}$. 
\end{thm}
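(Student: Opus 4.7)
The plan is to translate the $A$-maximal uncertainty condition into a statement about diagonal matrix elements of $\rho$ in the basis $\{\ket{i}\}$, and then simply solve for $\rho$.

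First, I observe the fundamental identity for rank-one projections: for any $\rho\in\sscript (H)$ and any $i$,
\begin{equation*}
\trace (\rho A_i)=\trace (\rho\ket{i}\bra{i})=\bra{i}\rho\ket{i}.
\end{equation*}
Thus the condition that $\rho$ be $A$-maximal uncertain, namely $\trace (\rho A_i)=\tfrac{1}{d}$ for every $i\in\brac{0,1,\ldots ,d-1}$, is equivalent to saying that all diagonal matrix elements of $\rho$ in the basis $\brac{\ket{i}}$ equal $\tfrac{1}{d}$.

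For the forward direction I would assume $\rho$ is $A$-maximal uncertain and set $T=\rho -\tfrac{1}{d}I$. Then for any $r\in\brac{0,1,\ldots ,d-1}$,
\begin{equation*}
\bra{r}T\ket{r}=\bra{r}\rho\ket{r}-\tfrac{1}{d}\bra{r}r\rangle =\tfrac{1}{d}-\tfrac{1}{d}=0,
\end{equation*}
and clearly $\rho =\tfrac{1}{d}I+T$ as required. For the converse, I would assume $\rho =\tfrac{1}{d}I+T$ with $\bra{r}T\ket{r}=0$ for all $r$, and then compute directly for each $i$
\begin{equation*}
\trace (\rho A_i)=\trace\paren{\tfrac{1}{d}I\ket{i}\bra{i}}+\trace (T\ket{i}\bra{i})=\tfrac{1}{d}+\bra{i}T\ket{i}=\tfrac{1}{d},
\end{equation*}
so $\rho$ is $A$-maximal uncertain.

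Honestly there is no real obstacle here; the only delicate point worth mentioning is that the theorem imposes no positivity requirement on $T$ beyond what is inherited from $\rho\in\sscript (H)$, so the characterization treats $T$ purely as an operator with vanishing diagonal and all the ``state'' content (positivity, trace one) is already carried by the hypothesis $\rho\in\sscript (H)$. In particular the condition $\trace (T)=0$ follows automatically from $\trace (\rho )=1$ and need not be imposed separately.
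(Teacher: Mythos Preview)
Your proof is correct and follows essentially the same approach as the paper: both arguments reduce the $A$-maximal uncertainty condition to the statement that all diagonal entries of $\rho$ in the basis $\{\ket{i}\}$ equal $\tfrac{1}{d}$, and then read off $T$. The only cosmetic difference is that the paper expands $\rho=\sum_{i,j}\lambda_{ij}\ket{i}\bra{j}$ explicitly and identifies $T$ as the off-diagonal part, whereas you define $T=\rho-\tfrac{1}{d}I$ directly; the content is identical.
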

\begin{proof}
If $\rho =\tfrac{1}{d}I+T$ where $\bra{r}T\ket{r}=0$ for all $r\in\brac{0,1,\ldots ,d-1}$, then
\begin{equation*}
\trace (\rho A_i)=\tfrac{1}{d}\trace (A_i)+\trace (TA_i)=\tfrac{1}{d}+\bra{i}T\ket{i}=\tfrac{1}{d}
\end{equation*}
Conversely, suppose $\trace (\rho A_i)=\tfrac{1}{d}$, $i\in\brac{0,1,\ldots ,d-1}$. We can write
$\rho =\sum\limits _{i,j}\lambda _{ij}\ket{i}\bra{j}$ and we obtain
\begin{align*}
\tfrac{1}{d}&=\trace (\rho A_i)=\bra{i}\rho\ket{i}=\bra{i}\sum _{r,s}\lambda _{r,s}\ket{r}\bra{s}\ket{i}\\
&=\bra{i}\sum _r\lambda _{ri}\bra{r}=\lambda _{ii}
\end{align*}
Hence,
\begin{equation*}
\rho =\sum _i\lambda _{ii}\ket{i}\bra{i}+\sum _{\substack{i,j\\i\ne j}}\lambda _{ij}\ket{i}\bra{j}
   =\tfrac{1}{d}I+\sum _{\substack{i,j\\i\ne j}}\lambda _{ij}\ket{i}\bra{j}
\end{equation*}
Letting $T=\sum\limits _{\substack{i,j\\i\ne j}}\lambda _{ij}\ket{i}\bra{j}$ we conclude that
\begin{equation*}
\bra{r}T\ket{r}=\sum _{\substack{i,j\\i\ne j}}\lambda _{ij}\elbows{r\mid i}\elbows{j\mid r}=0\qedhere
\end{equation*}
\end{proof}

In general, the operator $T$ is not unique. However, if $\rho$ is diagonal relative to the basis $\brac{\ket{i}, i=0,1,\ldots ,d-1}$, we obtain
$T=0$

\begin{cor}    
\label{cor43}
If $A$ is the observable of Theorem~\ref{thm42} and $\rho$ is an $A$-maximal uncertainty state that is diagonal relative to
$\brac{\ket{i}\colon i=0,1,\ldots ,d-1}$, then $\rho =\tfrac{1}{d}I$.
\end{cor}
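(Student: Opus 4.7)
The plan is to apply Theorem~\ref{thm42} directly and then exploit the diagonality hypothesis to force the ``perturbation'' $T$ to vanish. Since $\rho$ is an $A$-maximal uncertainty state for the atomic projective observable $A=\brac{\ket{i}\bra{i}\colon i=0,1,\ldots ,d-1}$, Theorem~\ref{thm42} gives the decomposition $\rho =\tfrac{1}{d}I+T$ with $\bra{r}T\ket{r}=0$ for every $r$. Equivalently, from the expansion $\rho =\sum _{i,j}\lambda _{ij}\ket{i}\bra{j}$ used in the proof of Theorem~\ref{thm42}, the diagonal coefficients must satisfy $\lambda _{ii}=\tfrac{1}{d}$, while $T=\sum _{i\ne j}\lambda _{ij}\ket{i}\bra{j}$ collects the off-diagonal coefficients.

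Next I would use the hypothesis that $\rho$ is diagonal relative to $\brac{\ket{i}\colon i=0,1,\ldots ,d-1}$. This means $\lambda _{ij}=\bra{i}\rho\ket{j}=0$ whenever $i\ne j$. Plugging this into the formula $T=\sum _{i\ne j}\lambda _{ij}\ket{i}\bra{j}$ immediately yields $T=0$, and therefore $\rho =\tfrac{1}{d}I$, as claimed.

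There is essentially no obstacle here: the corollary is a direct specialization of Theorem~\ref{thm42}, and the only work is observing that in the basis $\brac{\ket{i}}$ diagonality eliminates all cross terms of $T$. The entire argument fits in a few lines and requires no further machinery beyond the decomposition already established.
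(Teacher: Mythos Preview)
Your proof is correct. The paper's own argument is even more direct: rather than invoking Theorem~\ref{thm42}, it writes the diagonal state as $\rho=\sum_j\lambda_j\ket{j}\bra{j}$ from the outset and computes $\tfrac{1}{d}=\trace(\rho A_i)=\bra{i}\rho\ket{i}=\lambda_i$, but this is essentially the same computation you carry out via the decomposition $\rho=\tfrac{1}{d}I+T$.
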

\begin{proof}
We have that
\begin{equation*}
\tfrac{1}{d}=\trace (\rho A_i)=\bra{i}\sum _j\lambda _j\ket{j}\elbows{j\mid i}=\lambda _i
\end{equation*}
Hence, $\rho =\tfrac{1}{d}I$.
\end{proof}

\begin{example} 
Let $A=\brac{A_0,A_1}$ be the observable of Example~1 where
$A_0=\tfrac{1}{2}\left[\begin{smallmatrix}\smallskip 1&1\\1&1\end{smallmatrix}\right]$. Suppose $\rho =\ket{\phi}\bra{\phi}$ where
$\phi =\left[\begin{smallmatrix}\smallskip \alpha\\\beta\end{smallmatrix}\right]$ is an $A$-maximal uncertainty state and we can assume
$\alpha\in\real$. Then $\alpha ^2+\ab{\beta}^2=1$ and $\tfrac{1}{2}=\trace (\rho A_0)=\elbows{\phi ,A_0\phi}$. Since
\begin{equation*}
A_0(\phi )=\tfrac{1}{2}\begin{bmatrix}\smallskip 1&1\\1&1\end{bmatrix}\begin{bmatrix}\smallskip \alpha\\\beta\end{bmatrix}
   =\tfrac{1}{2}\begin{bmatrix}\smallskip \alpha +\beta\\\alpha +\beta\end{bmatrix},
\end{equation*}
we obtain
\begin{align*}
\tfrac{1}{2}&=\elbows{\phi ,A_0\phi}
   =\tfrac{1}{2}\elbows{\begin{bmatrix}\smallskip \alpha\\\beta\end{bmatrix},
   \begin{bmatrix}\smallskip \alpha +\beta\\\alpha +\beta\end{bmatrix}}
   =\tfrac{1}{2}\sqbrac{\alpha ^2+\ab{\beta}^2+\alpha (\beta +\overline{\beta}\,)}\\
   &=\tfrac{1}{2}\sqbrac{1+2\alpha\rmre (\beta )}
\end{align*}
Hence $\alpha =0$ or $\rmre (\beta )=0$. We conclude that
\begin{equation}                
\label{eq41}
\phi =\begin{bmatrix}\smallskip \alpha\\i\sqrt{1-\alpha ^2}\ \end{bmatrix} 
\end{equation}
Thus, there are infinitely many $A$-maximal uncertainty states of form \eqref{eq41}. For example,
\begin{equation*}
\hskip 10pc
\tfrac{1}{2}\begin{bmatrix}\smallskip 1\\i\sqrt{3}\ \end{bmatrix} \hbox{ or }
\tfrac{1}{3}\begin{bmatrix}\smallskip 1\\i\sqrt{8}\ \end{bmatrix}\hskip 10pc\square
\end{equation*}
\end{example}

Just as we were able to form affine combinations of uncertainty functions, we can form affine combinations of uncertainty measures.
If $B_i=\brac{A_{ij}\colon j=0,1,\ldots ,d-1}$, $i=1,2,\ldots ,n$, where $\sum\limits _{j=0}^{d-1}A_{ij}=I$ for all $i$, are observables and
$0<\lambda _i\le 1$, $i=1,2,\ldots ,n$, with $\sum\limits _{i=1}^n\lambda _i=1$, then
\begin{equation*}
C=\brac{\sum _{i=1}^n\lambda _iA_{ij}\colon j=0,1,\ldots ,d-1}=\sum _{i-1}^n\lambda _iB_i
\end{equation*}
is an observable. We can then define the $\paren{f,\sum\lambda _iB_i}$ uncertainty measure
\begin{equation*}
U_{\paren{f,\sum\lambda _iB_i}}(\rho )=f\paren{P_\rho ^{\sum\lambda _iB_i}}=f(P_\rho ^C)=\sum\lambda _if(P_\rho ^{B_i})
\end{equation*}
We can also form the $\paren{\sum\lambda _if_i, \sum\mu_jB_j}$-uncertainty measure
\begin{equation*}
U_{\paren{\sum\lambda _if_i,\sum\mu _jB_j}}(\rho )=\sum _i\lambda _iU_{\paren{f_i,\sum\mu _jB_j}}(\rho )
   =\sum _{i,j}\lambda _i\mu _jU_{(f_i,B_j)}(\rho )
\end{equation*}

\end{document}